\newtheorem{theorem}{Theorem}[section]
\newtheorem{proof}{Proof.}
\newtheorem{corollary}{Corollary}[section]
\def\qed{\hfill $\Box$}
\newcommand{\be}{\begin{eqnarray}}
\newcommand{\ee}{\end{eqnarray}}
\newcommand{\ba}{\begin{array}}
\newcommand{\ea}{\end{array}}
\newcommand{\bmat}{\left(\begin{array}}
\newcommand{\emat}{\end{array}\right)}
\newcommand{\no}{\nonumber}
\title{Inequality for local energy of Ising model with quenched randomness and its application}
\author{ Manaka Okuyama$^1$\thanks{manaka.okuyama.d2@tohoku.ac.jp} and Masayuki Ohzeki$^{1,2,3}$}
\begin{document}
\maketitle

\section{Introduction}

Spin-glass models describe magnetic materials that randomly interact spatially.
The mean field theory of spin-glass models, e.g., the Sherrington--Kirkpatrick model, has been solved rigorously by the full replica symmetry breaking solution~\cite{Parisi,Parisi2,Guerra,Talagrand}; however, it is extremely difficult to obtain analytical results for finite-dimensional models, except on the Nishimori line~\cite{Nishimori}.
Although analytical approaches ~\cite{ON} for two-dimensional systems are slightly progressing, those for three-dimensional systems have been primarily neglected, except numerical analysis.

In ferromagnetic spin models, correlation inequalities play an important role in non-perturbative analysis and yield rigorous results for unsolvable models.
Correlation inequalities are also valid for the Ising model in a random field.
A recent study~\cite{KTZ} proved based on the Fortuin--Kasteleyn--Ginibre inequality that the random-field Ising model comprising two-body interactions for all the lattice and field distributions does not have a spin-glass phase. 
Therefore, it is expected that the concept of correlation inequalities will be important for a rigorous analysis of spin-glass models, and their establishment for spin-glass models is a very important problem.

Some previous studies have been conducted on the correlation inequalities in spin-glass models.
A recent study~\cite{CG,CL} exhibited that the response of the quenched average of a partition function with respect to the variance is generally positive, which is considered as the counterpart of the Griffiths first inequality in spin-glass models.
In addition, for various bond randomness, including Gaussian and binary distribution types, it is shown that the counterpart of the Griffiths second inequality holds on the Nishimori line~\cite{MNC, Kitatani}.
However, correlation inequalities, as in the case of ferromagnetic spin models, have not been obtained in general, and a rigorous analysis based on them is yet to be conducted satisfactorily for spin-glass models.

In this study, we obtain a lower bound on the quenched average of the local energy for the Ising model with quenched randomness.
The result of a previous study~\cite{KNA} that was limited to a symmetric distribution is generalized to an asymmetric distribution.
Furthermore, as a simple application of the acquired inequality, we obtain the correlation inequalities for a Gaussian distribution.
We demonstrate that the expectation of the square of the correlation function generally has a finite lower bound at any temperature.
Thus, we prove that the spin-glass order parameter has a finite lower bound in the Ising model in a Gaussian random field, regardless of the forms of the other interactions.

The organization of this paper is as follows.
In Sec. II, we define the model and present the method to obtain the lower bound on the average of the local energy for the Ising model with quenched randomness.
In Sec. III, we describe the application of the acquired inequality when the randomness of the interactions follows a Gaussian distribution.
Finally, our conclusion is presented in Sec. IV.

\section{Lower bound on local energy for asymmetric distribution of randomness}
Following Ref. \cite{KNA}, we consider a generic form of the Ising model,
\be
H&=&- \sum_{B \subset{V}} J_{B} \sigma_B ,
\\
\sigma_B&\equiv& \prod_{i\in B} \sigma_i ,
\ee
where $V$ is the set of sites, the sum over $B$ is over all the subsets of $V$ in which interactions exist, and the lattice structure adopts any form.
The probability distribution of a random interaction $J_B$ is represented as $P_B (J_B)$.
The probability distributions can be generally different from each other, i.e., $P_B (x)\neq P_{C} (x)$, and are also allowed to present no randomness, i.e., $P_B(J_B)=\delta(J-J_B)$.

The correlation function for a set of fixed interactions, $\{J_{B}\}$, is expressed as
\be
\langle\sigma_A \rangle_{\{J_{B}\}}&=& \frac{ \Tr \sigma_A \exp\left(\beta \sum_{B \subset{V}} J_{B} \sigma_B \right)}{  \Tr \exp\left(\beta \sum_{B \subset{V}} J_{B} \sigma_B \right)} .
\ee
The configurational average over the distribution of the randomness of the interactions is written as 
\be
\mathbb{E}\left[g(\{J_{B}\})  \right] =  \left(\prod_{B \subset{V}} \int_{-\infty}^\infty dJ_B P_B(J_B) \right) g(\{J_{B}\}).
\ee
For example, the quenched average of the correlation function is obtained as
\be
\mathbb{E}\left[ \langle\sigma_A \rangle_{\{J_{B}\}} \right]&=&  \left(\prod_{B \subset{V}} \int_{-\infty}^\infty dJ_B P_B(J_B) \right)  \frac{ \Tr \sigma_A \exp\left(\beta \sum_{B \subset{V}} J_{B} \sigma_B \right)}{  \Tr \exp\left(\beta \sum_{B \subset{V}} J_{B} \sigma_B \right)} .
\ee
Our result is the following theorem.
\begin{theorem}\label{th1}
When the distribution function of the randomness satisfies 
\be
P_A(-J_A)&=& \exp(-2\beta_{\mathrm NL} J_A) P_A(J_A),
\ee
then for any even function $f(J_A)\ge0$, the system defined above satisfies the following inequality:
\be
\mathbb{E}\left[-J_{A} f(J_{A}) \langle\sigma_A \rangle_{\{J_{B}\}}  \right] &\ge& \mathbb{E}\left[-J_{A} f(J_{A})\tanh(\beta J_{A}) -J_{A}f(J_{A})(1-e^{-\beta_{\mathrm NL}J_{A}}) \frac{1}{\sinh(2\beta J_{A})}  \right] \label{inequality-2} .
\ee
\end{theorem}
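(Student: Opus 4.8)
The plan is to reduce the multi-bond expectation to a single-bond integral over $J_A$ and then to a pointwise algebraic inequality. First I would single out the interaction $J_A$ and freeze all the other bonds. Since $\sigma_A^2=1$, I can write $e^{\beta J_A\sigma_A}=\cosh(\beta J_A)+\sigma_A\sinh(\beta J_A)$, which yields the standard cavity form $\langle\sigma_A\rangle_{\{J_B\}}=(m+\tanh(\beta J_A))/(1+m\tanh(\beta J_A))$, where $m=\Tr[\sigma_A W]/\Tr[W]$ with $W=\exp(\beta\sum_{B\neq A}J_B\sigma_B)$ is the correlation in the absence of the $A$-bond. The essential point is that $m$ is independent of $J_A$ and satisfies $|m|\le1$. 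Taking the expectation over the remaining bonds (which only the distribution $P_A$ of the condition affects), it then suffices to prove the claimed inequality for the single integral over $J_A$ at each fixed $m\in[-1,1]$.

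Next I would fold the $J_A$-integral from $(-\infty,\infty)$ onto $(0,\infty)$. Substituting $J_A\to-J_A$ on the negative half-line and invoking the asymmetry condition $P_A(-J_A)=e^{-2\beta_{\mathrm{NL}}J_A}P_A(J_A)$ together with the oddness of $-J_A f(J_A)$ (recall $f$ is even), both sides collapse to integrals of $P_A(J_A)\,(-J_A f(J_A))$ over $J_A>0$ against explicit functions of $\tanh(\beta J_A)$ and $e^{-\beta_{\mathrm{NL}}J_A}$. A short computation should reorganize the right-hand side integrand into $\tanh(\beta J_A)(1+e^{-2\beta_{\mathrm{NL}}J_A})+(1-e^{-\beta_{\mathrm{NL}}J_A})^2/\sinh(2\beta J_A)$; the spontaneous appearance of the perfect square $(1-e^{-\beta_{\mathrm{NL}}J_A})^2$ here is the first hint that the non-trivial correction term has been chosen correctly.

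Because $-J_A f(J_A)\le0$ for $J_A>0$ and $P_A\ge0$, the folded inequality reduces to showing that the bracketed difference of the two integrands is \emph{nonpositive} pointwise for every $J_A>0$ and every $m\in[-1,1]$. After clearing the denominators $1-m^2\tanh^2(\beta J_A)$ and $2\tanh(\beta J_A)$ (both positive on this range) and grouping in powers of $m$, I expect the difference to collapse to the perfect square
\[
\big[\,m\tanh(\beta J_A)(1+e^{-\beta_{\mathrm{NL}}J_A})-(1-e^{-\beta_{\mathrm{NL}}J_A})\,\big]^2\ge0,
\]
which is manifestly nonnegative and thereby establishes the bound for all $m$ simultaneously (equality marking the Nishimori-type condition $m\tanh(\beta J_A)=\tanh(\beta_{\mathrm{NL}}J_A/2)$).

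The main obstacle is structural rather than conceptual: one must guess the exact form of the non-trivial correction term before the argument can close, since everything downstream is forced once the right-hand side is written so that the pointwise difference factors as a square. The genuine work is therefore the bookkeeping of the $e^{-2\beta_{\mathrm{NL}}J_A}$ factors generated by the folding and the verification that the resulting quadratic in $m$ is an exact square rather than merely sign-definite. The symmetric case $\beta_{\mathrm{NL}}=0$ of Ref.~\cite{KNA} is then recovered transparently, as the correction term vanishes identically while the square remains nonnegative.
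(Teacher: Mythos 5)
Your proposal is correct and follows essentially the same route as the paper: isolate $J_A$ with the other bonds frozen, fold the $J_A$-integral onto $(0,\infty)$ using the asymmetry condition $P_A(-J_A)=e^{-2\beta_{\mathrm{NL}}J_A}P_A(J_A)$, and show that the folded integrand is sign-definite because it is a perfect square times a factor of fixed sign. Your explicit cavity parametrization $\langle\sigma_A\rangle_{\{J_B\}}=\bigl(m+\tanh(\beta J_A)\bigr)/\bigl(1+m\tanh(\beta J_A)\bigr)$ just makes concrete what the paper encodes through the partition-function ratios $Z(\beta,\pm J_A)$ and the quantity $\Gamma(\beta,J_A)$: one can check that $\Gamma(\beta,J_A)$ is a positive multiple of $(1-e^{-\beta_{\mathrm{NL}}J_A})-m\tanh(\beta J_A)(1+e^{-\beta_{\mathrm{NL}}J_A})$, so your perfect square is, up to positive factors, exactly the $\Gamma^2(\beta,J_A)$ appearing in the paper's final display (and your equality condition $m\tanh(\beta J_A)=\tanh(\beta_{\mathrm{NL}}J_A/2)$ corresponds to $\Gamma=0$).
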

We  note that the right-hand side of Eq. (\ref{inequality-2}) does not depend on the other interactions.
When the distribution function is symmetric, i.e., $P_A(-J_A)=P_A(J_A)$ ($\beta_{\mathrm NL}=0$) and $f(J_A)=1$, Eq. (\ref{inequality-2}) is reduced to 
\be
\mathbb{E}\left[-J_{A} \langle\sigma_A \rangle_{\{J_{B}\}}  \right] &\ge& \mathbb{E}\left[-J_{A} \tanh(\beta J_{A}) \right] , \label{pre-ieq}
\ee
which is in accordance with the result in Ref. \cite{KNA}. In this case, an intuitive explanation of the inequality is possible: the local energy is generally larger than or equal to the energy in the absence of all the other interactions.
However, for $\beta_{\mathrm NL}\neq0$, it is difficult to provide an intuitive explanation, because the second term in the right-hand side of Eq. (\ref{inequality-2}) does not have a physical relevance.
On the other hand, the right-hand side of Eq. (\ref{inequality-2}) can be rewritten as
\be
&&\mathbb{E}\left[-J_{A} f(J_{A})\tanh(\beta J_{A}) -J_{A}f(J_{A})(1-e^{-\beta_{\mathrm NL}J_{A}}) \frac{1}{\sinh(2\beta J_{A})}  \right]
\no\\
&=&\mathbb{E}\left[-J_{A}f(J_{A})\tanh(\beta J_{A}) \right]  - \int_{0}^\infty dJ_A P_A(J_A) J_{A} f(J_A)  e^{-2\beta_{\mathrm NL} J_A} (1-e^{\beta_{\mathrm NL}J_{A}})^2 \frac{1}{\sinh(2\beta J_{A})}  
\no\\
&\le&\mathbb{E}\left[-J_{A}f(J_{A})\tanh(\beta J_{A}) \right],
\ee
which suggests that the local energy can be lower than the energy in the absence of all the other interactions, unlike in the case of $\beta_{\mathrm NL}=0$.
We also note that the second term in the right-hand side of Eq. (\ref{inequality-2}) is numerically very small.
Thus, to establish the bound for $\beta_{\mathrm NL}\neq0$, such a small correction must considered.

\begin{proof}
{\rm
We define $Z(\beta, J_{A})$ and $\langle\sigma_A \rangle_{J_{A}}$ as
\be
Z(\beta, J_{A}) &=&  \sum_{\{ \sigma \}} \exp\left(\beta \sum_{B \subset{V}\setminus A} J_{B} \sigma_B +\beta  J_{A} \sigma_A  \right),
\\
\langle\sigma_A \rangle_{J_{A}}  &=& \frac{\sum_{\{ \sigma \}} \sigma_A \exp\left(\beta \sum_{B \subset{V}\setminus A} J_{B} \sigma_B +\beta  J_{A} \sigma_A  \right)}{\sum_{\{ \sigma \}} \exp\left(\beta \sum_{B \subset{V}\setminus A} J_{B} \sigma_B +\beta  J_{A} \sigma_A  \right)} .
\ee
We note that  $\langle\sigma_A \rangle_{J_{A}}=\langle\sigma_A \rangle_{\{J_{B}\}}$ but $\langle\sigma_A \rangle_{-J_{A}} \neq \langle\sigma_A \rangle_{\{J_{B}\}}$.
Subsequently, we obtain
\be
\frac{Z(\beta, J_{A})}{Z(\beta,-J_{A})} &=&\cosh(2\beta J_{A}) +  \langle\sigma_A \rangle_{-J_{A}}  \sinh(2\beta J_{A})
\no\\
&=&e^{\beta_{\mathrm NL}J_{A}}+  \Gamma(\beta,-J_{A})  \frac{\sinh(2\beta J_{A})}{J_{A}}\ge 0 \label{part-relation-2-1},
\\
\frac{Z(\beta,-J_{A})}{Z(\beta,J_{A})} &=&\cosh(2\beta J_{A}) -  \langle\sigma_A \rangle_{J_{A}}  \sinh(2\beta J_{A})
\no\\
&=&e^{-\beta_{\mathrm NL}J_{A}}+  \Gamma(\beta, J_{A}) \frac{\sinh(2\beta J_{A})}{J_{A}}\ge 0 \label{part-relation-2-2},
\ee
where $\Gamma(\beta,J_{A})$ is defined as 
\be
\Gamma(\beta,J_{A}) \equiv -J_{A} \langle \sigma_A \rangle_{J_{A}} +J_{A} \tanh(\beta J_{A}) +(1 -e^{-\beta_{\mathrm NL}J_{A}} )\frac{J_{A}}{\sinh(2\beta J_{A})} .\label{def-Gam}
\ee
Since Eq. (\ref{part-relation-2-1}) is the reciprocal of  Eq. (\ref{part-relation-2-2}), we obtain
\be
e^{-2\beta_{\mathrm NL}J_{A}}\Gamma(\beta,-J_{A})  &=&\frac{-e^{-\beta_{\mathrm NL}J_{A}}\Gamma(\beta, J_{A})  }{e^{-\beta_{\mathrm NL}J_{A}}+  \Gamma(\beta, J_{A}) \frac{\sinh(2\beta J_{A})}{J_{A}}} \label{inverse-relation-2} .
\ee
Then, from Eq. (\ref{def-Gam}), we immediately obtain  
\be
&&\mathbb{E}\left[-J_{A} f(J_A)\langle\sigma_A \rangle_{\{J_{B}\}}  \right] 
\no\\
&=& \mathbb{E}\left[f(J_A)\Gamma(\beta, J_{A})-J_{A}f(J_A) \tanh(\beta J_{A}) -J_{A}f(J_A)(1-e^{-\beta_{\mathrm NL}J_{A}} ) \frac{1}{\sinh(2\beta J_{A})}  \right] \label{def-st} .
\ee
Furthermore, for any even function $f(J_A)\ge0$, we obtain $\mathbb{E}\left[ f(J_A) \Gamma(\beta, J_{A}) \right]\ge0$, because
\be
\mathbb{E}\left[ f(J_A)\Gamma(\beta, J_{A}) \right]
&=& \int_{-\infty}^\infty dJ_A P_A(J_A) f(J_A) \mathbb{E}\left[  \Gamma(\beta, J_{A}) \right]'
\no\\
&=& \int_{0}^\infty dJ_A P_A(J_A)  f(J_A) \mathbb{E}\left[ \Gamma(\beta, J_{A}) +  \exp(-2\beta_{\mathrm NL}J_A) \Gamma(\beta, -J_{A}) \right]'
\no\\
&=& \int_{0}^\infty dJ_A P_A(J_A) f(J_A)  \mathbb{E}\left[\frac{ \Gamma^2(\beta, J_{A}) \frac{\sinh(2\beta J_{A})}{J_{A}}}{e^{-\beta_{\mathrm NL}J_{A}}+  \Gamma(\beta, J_{A}) \frac{\sinh(2\beta J_{A})}{J_{A}}} \right]'
\no\\
&\ge&0 , \label{posiive2} 
\ee
where $\mathbb{E}[\cdots]'$ denotes the configurational average over the randomness of the interactions other than $J_A$. We used Eq. (\ref{inverse-relation-2}) in the third identity and Eq. (\ref{part-relation-2-2}) in the last inequality. 
Thus, Eqs. (\ref{def-st}) and (\ref{posiive2}) yield  Eq. (\ref{inequality-2}). 
}
\qed\end{proof}

\section{Application to Gaussian spin-glass model}
In this section, we present the application of Eq. (\ref{inequality-2}) to a spin-glass model with a Gaussian distribution.
We note that the result for $\beta_{\mathrm NL}=0$ in Ref. \cite{KNA} is sufficient to obtain the inequalities that are presented in this section.

First, we consider the distinct case, $P_A(J_{0,A}-J_A)= P_A(J_{0,A}+J_A)$.
Then, we obtain the following result:
\begin{corollary}
When the distribution function of the randomness satisfies 
\be
P_A(J_{0,A}-J_A)&=& P_A(J_{0,A}+J_A),
\ee
then for any even function $f(J_A)\ge0$, the system defined above satisfies the following inequality:
\be
\mathbb{E}\left[\left(J_{0,A}-J_{A}\right) f(J_{A}-J_{0,A}) \langle\sigma_A \rangle_{\{J_{B}\}}  \right] &\ge& \mathbb{E}\left[\left(J_{0,A}-J_{A}\right)f(J_{A}-J_{0,A}) \tanh(J_{A}-J_{0,A})   \right] \label{inequality-1} .
\ee
\end{corollary}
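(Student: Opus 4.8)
The plan is to deduce the Corollary from the symmetric special case of Theorem~\ref{th1}, namely Eq.~(\ref{pre-ieq}) (equivalently the $\beta_{\mathrm NL}=0$ result of Ref.~\cite{KNA}), by shifting the random coupling so that its distribution becomes symmetric about the origin. First I would introduce the shifted interaction $K_A \equiv J_A - J_{0,A}$ and split the corresponding term of the Hamiltonian as $\beta J_A \sigma_A = \beta J_{0,A}\sigma_A + \beta K_A \sigma_A$. The constant piece $\beta J_{0,A}\sigma_A$ is then absorbed into the collection of all the other interactions. This regrouping is legitimate because the framework explicitly permits deterministic couplings (distributions of the form $\delta(J-J_{0,A})$), and it leaves the physical correlation function $\langle\sigma_A\rangle_{\{J_B\}}$ completely unchanged, so the left-hand side of Eq.~(\ref{inequality-1}) is untouched by the rewriting.

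Next I would translate the hypothesis into a statement about the law of $K_A$. Under the substitution $J_A = J_{0,A}+K_A$ the configurational average over $J_A$ becomes an average over $K_A$ with density $P_A(J_{0,A}+K_A)$, and the assumption $P_A(J_{0,A}-J_A)=P_A(J_{0,A}+J_A)$ says precisely that $P_A(J_{0,A}+K_A)=P_A(J_{0,A}-K_A)$, i.e.\ that $K_A$ is distributed symmetrically about zero. Since $f$ is even and nonnegative, $f(K_A)=f(J_A-J_{0,A})$ is again an even, nonnegative function of the new variable. Hence all the hypotheses of the symmetric case are satisfied with $K_A$ playing the role of $J_A$.

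I would then apply Eq.~(\ref{pre-ieq}) with the even weight $f$ to the variable $K_A$, obtaining $\mathbb{E}\left[-K_A f(K_A)\langle\sigma_A\rangle_{\{J_B\}}\right]\ge \mathbb{E}\left[-K_A f(K_A)\tanh(\beta K_A)\right]$. Finally, undoing the substitution through $-K_A = J_{0,A}-J_A$ and $f(K_A)=f(J_A-J_{0,A})$ reproduces Eq.~(\ref{inequality-1}) term by term. The step I expect to require the most care is not any estimate but the bookkeeping of this reduction: confirming that moving $J_{0,A}\sigma_A$ into the background Hamiltonian genuinely preserves $\langle\sigma_A\rangle_{\{J_B\}}$, and checking that the stated hypothesis is \emph{exactly} the symmetry of the shifted law (so that the symmetric inequality applies with no residual correction term, in contrast to the $\beta_{\mathrm NL}\neq0$ case of Theorem~\ref{th1}). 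Once the change of variables is arranged, the Corollary is an immediate specialization, and no new inequality beyond Theorem~\ref{th1} is needed.
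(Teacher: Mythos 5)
Your proposal is correct and follows essentially the same route as the paper: the paper likewise regards $P_A(J_{0,A}+J_A)$ as a new symmetric distribution for the shifted coupling, absorbs the deterministic piece $J_{0,A}\sigma_A$ into the background Hamiltonian (writing the thermal average as $\langle\sigma_A\rangle_{J_A+J_{0,A}}$), and applies the $\beta_{\mathrm NL}=0$ case of Theorem~\ref{th1} before undoing the change of variables. Your extra care about why the regrouping leaves $\langle\sigma_A\rangle_{\{J_B\}}$ unchanged is exactly the point the paper treats implicitly.
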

\begin{proof}{}
{\rm
We regard $P_A(J_{0,A}+J_A)$ as a new probability distribution $P_A'(J_A)$, where $P_A'(J_A)$ is symmetric.
Therefore, using Eq. (\ref{inequality-2}) for $\beta_{\mathrm NL}=0$, we obtain 
\be
\mathbb{E}\left[\left(J_{0,A}-J_{A}\right) f(J_{A}-J_{0,A}) \langle\sigma_A \rangle_{\{J_{B}\}}  \right] &=& \int_{-\infty}^\infty dJ_A P_A(J_{0,A}+J_A) \mathbb{E}\left[-J_{A}f(J_{A}) \langle\sigma_A \rangle_{J_A+J_{0,A}}  \right]' 
\no\\
&\ge&\int_{-\infty}^\infty dJ_A P_A(J_{0,A}+J_A)  \left( -J_{A} \right) f(J_A) \tanh(J_{A})  
\no\\
&=&\mathbb{E}\left[\left(J_{0,A}-J_{A}\right)f(J_{A}-J_{0,A}) \tanh(J_{A}-J_{0,A})   \right] .
\ee
}
\qed\end{proof}
In the following, using Eq. (\ref{inequality-1}), we obtain several inequalities.

\subsection{Correlation inequality for Gaussian spin-glass model}
Next, we consider the case where all the interactions follow a Gaussian distribution with mean $J_{0,B}$ and variance $\Lambda_{B}^2$.
All the $J_{0,B}$ and $\Lambda_{B}^2$ can adopt different values.
We denote the configurational average over the distribution of the randomness of the interactions as $\mathbb{E}\left[\cdots  \right]_{\left\{J_{0,B},\Lambda_{B}^2 \right\}}$.
Then, we obtain the following result:
\begin{corollary}
For the quenched average of the square of the correlation function, we obtain a lower bound, 
 \be
 \mathbb{E}\left[  \tanh^2(\beta  J_{A}) \right]_{\left\{0,\Lambda_{A}^2 \right\}} \le  \mathbb{E}\left[ \langle\sigma_A \rangle_{\{J_{B} \}}^2  \right]_{\left\{J_{0,B},\Lambda_{B}^2 \right\}} \label{corr-ineq} .
\ee
\end{corollary}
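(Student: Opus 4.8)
The plan is to reduce (\ref{corr-ineq}) to the corollary inequality (\ref{inequality-1}) by means of integration by parts for the Gaussian measure. The basic tool is the identity that, for a single interaction $J_A$ drawn from a Gaussian with mean $J_{0,A}$ and variance $\Lambda_A^2$ and for any bounded differentiable $g$,
\be
\mathbb{E}\left[ (J_A - J_{0,A}) g(J_A) \right]_{\{J_{0,B},\Lambda_B^2\}} &=& \Lambda_A^2 \, \mathbb{E}\left[ \partial_{J_A} g(J_A) \right]_{\{J_{0,B},\Lambda_B^2\}} ,
\ee
with no boundary term, since the Gaussian weight decays at infinity and the relevant functions ($\langle\sigma_A \rangle$ and $\tanh$) are bounded.

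First I would apply this with $g = \langle\sigma_A \rangle_{\{J_B\}}$. Using $\sigma_A^2 = 1$ one obtains $\partial_{J_A}\langle\sigma_A \rangle_{\{J_B\}} = \beta(1 - \langle\sigma_A \rangle_{\{J_B\}}^2)$, so the integration by parts converts the first moment into the quantity we want to bound:
\be
\mathbb{E}\left[ (J_A - J_{0,A}) \langle\sigma_A \rangle_{\{J_B\}} \right]_{\{J_{0,B},\Lambda_B^2\}} &=& \beta\Lambda_A^2\left( 1 - \mathbb{E}\left[ \langle\sigma_A \rangle_{\{J_B\}}^2 \right]_{\{J_{0,B},\Lambda_B^2\}} \right) .
\ee
Thus a lower bound on $\mathbb{E}[\langle\sigma_A \rangle^2]$ is equivalent to an upper bound on $\mathbb{E}[(J_A - J_{0,A})\langle\sigma_A \rangle]$.

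Next I would supply that upper bound from the corollary (\ref{inequality-1}) taken with $f\equiv 1$. Writing $(J_{0,A}-J_A) = -(J_A - J_{0,A})$, (\ref{inequality-1}) gives $\mathbb{E}[(J_A - J_{0,A})\langle\sigma_A \rangle] \le \mathbb{E}[(J_A - J_{0,A})\tanh(\beta(J_A - J_{0,A}))]$. The right-hand side involves only the single variable $J_A$, and since $J_A - J_{0,A}$ is a zero-mean Gaussian of variance $\Lambda_A^2$, a second application of the integration-by-parts identity (now centered at $0$) evaluates it as $\beta\Lambda_A^2(1 - \mathbb{E}[\tanh^2(\beta J_A)]_{\{0,\Lambda_A^2\}})$. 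Substituting into the previous display and canceling the common positive factor $\beta\Lambda_A^2$ produces exactly (\ref{corr-ineq}).

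The work here is bookkeeping rather than a new idea, and the one point that needs care is the chain of signs: (\ref{inequality-1}) is a lower bound on $\mathbb{E}[(J_{0,A}-J_A)\langle\sigma_A \rangle]$, which becomes an upper bound on $\mathbb{E}[(J_A - J_{0,A})\langle\sigma_A \rangle]$, and this quantity enters $\mathbb{E}[\langle\sigma_A \rangle^2]$ with the negative coefficient $-1/(\beta\Lambda_A^2)$, so the two reversals conspire to give a genuine lower bound. The other thing to notice is why the right-hand side of (\ref{corr-ineq}) carries the mean-zero measure $\{0,\Lambda_A^2\}$ while the left-hand correlation average keeps the full means $\{J_{0,B}\}$: it is the shift $J_A \mapsto J_A - J_{0,A}$ that symmetrizes the distribution (so that (\ref{inequality-1}) applies) and simultaneously removes the mean from the $\tanh$ term.
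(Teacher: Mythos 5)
Your proposal is correct and takes essentially the same route as the paper's proof: apply the corollary inequality (\ref{inequality-1}) with $f\equiv 1$ and then evaluate both sides by Gaussian integration by parts, using $\partial_{J_A}\langle\sigma_A\rangle_{\{J_B\}}=\beta(1-\langle\sigma_A\rangle_{\{J_B\}}^2)$ and the corresponding identity for $\tanh$. The only difference is that you spell out the integration-by-parts computation and the sign bookkeeping that the paper compresses into the single phrase ``conducting integration by parts, we obtain Eq.~(\ref{corr-ineq}).''
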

We note that the left-hand side of Eq. (\ref{corr-ineq}) is independent of mean  $\{J_{0,b}\}$.
Inequality (\ref{corr-ineq}) indicates that the expectation of the square of the correlation function is generally a finite non-zero value, regardless of the other interactions.
This behavior is quite different from those of the correlation functions of ferromagnetic models and may reflect the fact that the counterpart of the Griffiths second inequality has not been established in spin-glass models~\cite{CUV}.

\begin{proof}{}
{\rm
For the Gaussian distribution with mean $J_{0,B}$ and variance $\Lambda_{B}^2$, and $f(J_{A})=1$, Eq. (\ref{inequality-1}) is reduced to
\be
\mathbb{E}\left[(J_{0,A}-J_{A})\langle\sigma_A \rangle_{\{ J_{B}\}}  \right]_{\left\{J_{0,B},\Lambda_{B}^2 \right\}}&=&\mathbb{E}\left[ -J_{A} \langle\sigma_A \rangle_{\{J_{B}+J_{0,B}\}}  \right]_{\left\{0,\Lambda_{B}^2 \right\}}
\no\\
&\ge& \mathbb{E}\left[ - J_{A}\tanh(\beta J_{A}) \right]_{\left\{0,\Lambda_{B}^2 \right\}} .
\ee
Furthermore, conducting integration by parts, we obtain Eq. (\ref{corr-ineq}).
}
\qed\end{proof}{}
A similar calculation is possible for higher order terms.
Taking $f(J_{A})=J_{A}^2$ in Eq. (\ref{inequality-1}), we obtain
\be
 \mathbb{E}\left[  -(J_{A}-J_{0,A})^3 \langle\sigma_A \rangle_{\{J_{B}\}}  \right]_{\left\{J_{0,B},\Lambda_{B}^2 \right\}}&=&\mathbb{E}\left[ -J_{A}^3  \langle\sigma_A \rangle_{\{J_{B}+J_{0,B}\}}  \right]_{\left\{0,\Lambda_{B}^2 \right\}}
\no\\
&\ge& \mathbb{E}\left[  -J_{A}^3 \tanh(\beta J_{A}) \right]_{\left\{0,\Lambda_{B}^2 \right\}} .
\ee
Conducting integration by parts and using Eq. (\ref{corr-ineq}), we obtain the lower bound on the expectation of the fourth power of the correlation function,
\be
 \mathbb{E}\left[  \frac{1}{6\beta^2\Lambda_{A}^2} \left(8\beta^2\Lambda_{A}^2 -3 \right) \left(\langle\sigma_A \rangle_{\{J_{B}+J_{0,B} \}}^2-\tanh^2(\beta J_{A}) \right) + \tanh^4(\beta J_{A}) \right]_{\left\{0,\Lambda_{A}^2 \right\}} 
 \le  \mathbb{E}\left[ \langle\sigma_A \rangle_{\{J_{B} \}}^4  \right]_{\left\{J_{0,B},\Lambda_{B}^2 \right\}}. \label{4th-corr-ineq}
\ee
For $8\beta^2\Lambda_{B}^2 \ge 3$, Eqs. (\ref{corr-ineq}) and (\ref{4th-corr-ineq}) yield
\be
 \mathbb{E}\left[ \tanh^4(\beta J_{A}) \right]_{\left\{0,\Lambda_{A}^2 \right\}} 
 \le  \mathbb{E}\left[ \langle\sigma_A \rangle_{\{J_{B} \}}^4  \right]_{\left\{J_{0,B},\Lambda_{B}^2 \right\}}.
\ee
Thus, for a sufficiently high temperature, the quenched average of the fourth power of the correlation function has a non-zero lower bound.

\subsection{Lower bound on spin-glass order-parameter in  Gaussian random-field Ising model}
Finally, we demonstrate that the spin-glass order-parameter in the Ising model in a Gaussian random field generally adopts a finite value at any temperature, regardless of the forms of the other interactions.

We consider the case where a random field, $\{h_i\},$ is independently applied to all the sites, where $\{h_i\}$  follows a Gaussian distribution with mean $J_{0}$ and variance $\Lambda_{}^2$.
The Hamiltonian is obtained as
\be
H&=&- \sum_{B \subset{V}} J_{B} \sigma_{B }
\no\\
&=&- \sum_{B\subset{V}\setminus{\{h_i\}}} J_{B} \sigma_{B}- \sum_{i=1}^N h_i \sigma_i  , \label{random-field-system}
\ee
where interaction $J_B$ other than $\{h_i\}$ takes any form.
Then, Eq. (\ref{corr-ineq}) is reduced to
\be
 \mathbb{E}\left[  \tanh^2(\beta h_i) \right]_{\left\{0,\Lambda^2 \right\}} \le  \mathbb{E}\left[ \langle\sigma_i \rangle_{\{J_{B} \}}^2  \right]_{\left\{J_{0},\Lambda_{}^2 \right\}} ,
\ee
which suggests that the quenched average of the square of the local magnetization has a non-zero value.

Furthermore, because the same inequality holds for all the sites, we obtain the following result:
\begin{corollary}
For the spin-glass order-parameter, $q$, 
\be
q=\frac{1}{N} \sum_i \mathbb{E}\left[ \langle\sigma_i \rangle_{\{J_{B} \}}^2  \right]_{\left\{J_{0},\Lambda_{}^2 \right\}},
\ee
the system (\ref{random-field-system}) satisfies the following inequality:
\be
 \mathbb{E}\left[  \tanh^2(\beta h_i) \right]_{\left\{0,\Lambda^2 \right\}} \le  q. \label{sg-order-finite}
\ee
\end{corollary}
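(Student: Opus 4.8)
The plan is to obtain the stated bound by averaging the single-site inequality already established over all lattice sites. First I would invoke the inequality derived immediately above the statement,
\be
\mathbb{E}\left[\tanh^2(\beta h_i)\right]_{\{0,\Lambda^2\}} \le \mathbb{E}\left[\langle\sigma_i\rangle_{\{J_B\}}^2\right]_{\{J_0,\Lambda^2\}},
\ee
which is the specialization of Eq. (\ref{corr-ineq}) to the random-field Hamiltonian (\ref{random-field-system}). This holds for every site $i$, since each field $h_i$ is drawn from the same Gaussian with mean $J_0$ and variance $\Lambda^2$, so the reduction of the correlation inequality applies site by site.

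Next I would sum both sides over $i=1,\dots,N$ and divide by $N$. The right-hand side becomes exactly the definition of the spin-glass order parameter $q$. The key observation on the left-hand side is that $\mathbb{E}[\tanh^2(\beta h_i)]_{\{0,\Lambda^2\}}$ carries no residual dependence on the site index: it is computed from a single-field Gaussian of mean zero and variance $\Lambda^2$, which is identical for all $i$ (recall that the left-hand side of Eq. (\ref{corr-ineq}) is independent of the mean $J_0$). Hence its site average returns the same constant, and one arrives directly at Eq. (\ref{sg-order-finite}).

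There is essentially no analytic obstacle here; the entire content is inherited from the Corollary yielding Eq. (\ref{corr-ineq}). The only point requiring care is recognizing the left-hand side as a site-independent constant so that the site average collapses trivially, which relies on the assumption that all random fields share a common distribution. I would note in passing that if the variances were allowed to differ across sites, the argument would still go through, with the left-hand side replaced by the site average of the individual $\mathbb{E}[\tanh^2(\beta h_i)]_{\{0,\Lambda_i^2\}}$, furnishing a valid, though then site-dependent, lower bound on $q$.
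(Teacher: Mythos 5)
Your proposal is correct and follows exactly the paper's own reasoning: the paper derives the single-site specialization of Eq.~(\ref{corr-ineq}) for the random-field Hamiltonian~(\ref{random-field-system}) and then obtains the corollary by noting that the same inequality holds at every site, so that averaging over $i$ (with the site-independent left-hand side, since all $h_i$ share the same Gaussian distribution) yields Eq.~(\ref{sg-order-finite}). Nothing is missing, and your remark about site-dependent variances is a valid minor extension.
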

Thus, when a Gaussian random field is applied, the spin-glass order-parameter generally has a non-zero lower bound.
In ferromagnetic models, the ferromagnetic order parameter, i.e., magnetization, has a finite value when a magnetic field is applied.
Equation (\ref{sg-order-finite}) suggests that a similar phenomenon occurs in the Ising model in a Gaussian random field.
This is a natural consequence; however, the existence of a finite lower bound is not obvious.

In addition, we note that Eq. (\ref{sg-order-finite}) does not indicate that there is a spin-glass phase in the Ising model in a Gaussian random field.

\section{ Conclusions}
In this study, we have obtained the lower bound on the local energy of the Ising model with quenched randomness.
We emphasize that the acquired inequality (\ref{inequality-2}) is independent of the other interactions.
Our result is a natural generalization of Ref. \cite{KNA} in which a symmetric distribution was considered.

Applying the obtained inequality to a Gaussian spin-glass model, we determine that the expectation of the square of the correlation function generally has a finite lower bound at any temperature.
Thus, the spin-glass order-parameter in the Ising model in a Gaussian random field generally adopts a finite value at any temperature, which is a natural but not a obvious result.

It is an interesting question whether a similar inequality as Eq. (\ref{corr-ineq}) will hold for a general distribution function of the random interactions.
Our proof relies on the property of a Gaussian distribution, and we have not obtained proofs for other distributions.

\begin{acknowledgment}
The authors thank Shuntaro Okada for useful discussions.
The present work was financially supported by JSPS KAKENHI Grant Nos. 18H03303, 19H01095, and 19K23418, and JST-CREST grant (No. JPMJCR1402) of the Japan Science and Technology Agency.
\end{acknowledgment}


\end{document}